\newtheorem{theorem}{Theorem}[section]
\newtheorem{lemma}[theorem]{Lemma}
\newenvironment{proof}[1][Proof]{\begin{trivlist}
\item[\hskip \labelsep {\bfseries #1}]}{\end{trivlist}}
\newenvironment{definition}[1][Definition]{\begin{trivlist}
\item[\hskip \labelsep {\bfseries #1}]}{\end{trivlist}}
\newcommand{\qed}{\nobreak \ifvmode \relax \else
      \ifdim\lastskip<1.5em \hskip-\lastskip
      \hskip1.5em plus0em minus0.5em \fi \nobreak
      \vrule height0.75em width0.5em depth0.25em\fi}
\newcommand{\nat}{\ensuremath{\mathbb{N}}}
\newcommand{\natinf}{\ensuremath{\mathbb{N}^\infty}}
\newcommand{\config}{\ensuremath{\mathit{Conf}}}
\newcommand{\defin}{\stackrel{\rm def}{=}}
\newcommand{\techrep}[1]{ }
\newcommand{\dequi}{\ensuremath{\equiv_\maxConstant }}
\newcommand{\TA}{TA\xspace}
\newcommand{\uppaal}{{\sc Uppaal}\xspace}
\newcommand{\tapaal}{TAPAAL\xspace}
\newcommand{\dwl}{\ensuremath{\mathit{w}}}
\newcommand{\dpl}{\ensuremath{\mathit{p}}}
\newcommand{\TADefFull}{\ensuremath{A=\TADefTuple}\xspace}
\newcommand{\TADefTuple}{\ensuremath{(\TAlocs, \TAClocks, \TATrans,\TAinit)}\xspace}
\newcommand{\TAlocs}{\ensuremath{L}\xspace}
\newcommand{\TAClocks}{\ensuremath{C}\xspace}
\newcommand{\TATrans}{\ensuremath{\goes{}}\xspace}
\newcommand{\TAinit}{\ensuremath{\ell_0}\xspace}
\newcommand{\clockguards}{\ensuremath{{\cal G}}}
\newcommand{\anchor}{\ensuremath{\gamma\space}}
\newcommand{\setAnchor}{\ensuremath{\mathit{Anchors}\xspace}}
\newcommand{\maxOf}[1]{\ensuremath{\max(#1)\xspace}}
\newcommand{\minOf}[1]{\ensuremath{\min(#1)\xspace}}
\newcommand{\TDS}{\ensuremath{T_{\mathit{DS}}}}
\newcommand{\TBDS}{\ensuremath{T_{\mathit{BDS}}}}
\newcommand{\maxConstant}{\ensuremath{\mathit{MC}\xspace}}
\newcommand{\setIntervals}{\ensuremath{\mathcal{I}}\xspace}
\newcommand{\valuations}{\ensuremath{\mathcal{V}}\xspace}
\newcommand{\mcplus}{\ensuremath{\oplus}\xspace}
\newcommand{\punknown}{\ensuremath{\infty}\xspace}
\newcommand{\pwList}{\ensuremath{\mathit{PW}\xspace}}
\newcommand{\pwUdef}{\ensuremath{\bot\xspace}}
\newcommand{\lbound}{\ensuremath{\mathit{lb}\xspace}}
\newcommand{\ubound}{\ensuremath{\mathit{ub}\xspace}}
\newcommand{\lboundOf}[1]{\ensuremath{\mathit{lb}(#1)\xspace}}
\newcommand{\uboundOf}[1]{\ensuremath{\mathit{ub}(#1)\xspace}}
\newcommand{\goes}[1]{\ensuremath{\stackrel{#1}{\longrightarrow}}}
\newcommand{\goesDS}[1]{\ensuremath{\stackrel{#1}{\longrightarrow}_{DS}}}
\newcommand{\goesEDS}[1]{\ensuremath{\stackrel{#1}{\longrightarrow}_{BDS}}}
\newcommand{\dtts}{\ensuremath{\mathit{DTTS}}\xspace}
\newcommand{\addtopw}{\ensuremath{\mathtt{AddToPW}}\xspace}
\newcommand{\kyrke}{Kenneth Y. J{\o}rgensen}
\newcommand{\kgl}{Kim G. Larsen}
\newcommand{\srba}{Ji\v{r}\'{\i} Srba}
\title{Time-Darts: A Data Structure for Verification of \\ Closed Timed
Automata\footnote{The paper was supported by VKR Center of Excellence MT-LAB.}}
\author{\kyrke \quad \quad \kgl \quad \quad \srba
\institute{Department of Computer Science, Aalborg University \\
             Selma Lagerl{\"o}fs Vej 300, 9220 Aalborg East, Denmark}
\email{\{kyrke,kgl,srba\}@cs.aau.dk}
}
\begin{document}
\maketitle
\begin{abstract}
  Symbolic data structures  for model checking timed  systems have been
  subject  to a significant  research,  with Difference  Bound  Matrices
  (DBMs)  still being  the  preferred data  structure  in several  mature
  verification  tools.  In comparison,  discretization offers  an easy
  alternative,  with all  operations having  linear-time complexity  in the
  number of  clocks, and yet  valid for a large
  class of  \emph{closed}  systems.   Unfortunately,
  fine-grained discretization causes itself a state-space explosion.  
  We   introduce  a   new   data  structure   called
  \emph{time-darts} for the symbolic representation of state-spaces of
  timed  automata.  Compared  with the complete  discretization,  a single
  time-dart allows to represent an  arbitrary large set of states, yet
  the time complexity of operations on time-darts remain linear in the
  number  of clocks.
  We prove the correctness of the suggested reachability algorithm and
  perform several  experiments in order to compare  the performance of
  time-darts and the complete  discretization.
  The  main  conclusion  is   that in all our experiments 
  the  time-dart  method   
  outperforms the complete  discretization  and it scales 
  significantly better for models with larger constants.
\end{abstract}

\section{Introduction}

Timed automata~\cite{AD94} are a well studied  formalism for modelling
and  verification  of real-time  systems.   Over  the years  extensive
research effort  has been made  towards the design of  data structures
and algorithms allowing for  efficient model checking of this modeling
formalism.  These techniques have by  now been implemented in a number
of    mature    tools     (e.g.     \uppaal    \cite{tutorial04},    IF
\cite{DBLP:conf/cav/BozgaGM02},      Kronos     \cite{DOTY96},     PAT
\cite{LiuSD08},     Rabbit     \cite{DBLP:conf/cav/BeyerLN03},     RED
\cite{DBLP:conf/rtcsa/HsiungWC00}),     with    zone-based    analysis
\cite{DBLP:conf/avmfss/Dill89,   DBLP:conf/ifip/BerthomieuM83}   still
being  predominant, stemming from  that  fact  that Difference  Bound
Matrices (DBMs)  offer a very  compact data structure  for efficient
implementation of the  various operations required for the state-space
exploration. 
Still the DBM data structure suffers  from the fact that all operations
have at least quadratic---and the crucial closure operation even 
cubic---time complexity in the number of clocks (though for diagonal-free
constraints the operations can be implemented in quadratic time~\cite{dbm-quadratic}).  
In   contrast, as  advocated in 
\cite{BMT:discrete:99,DBLP:conf/charme/Lamport05}, the use of \emph{discretization}
offers  an  easy  alternative,   with  all  operations  having  linear
complexity in the number of clocks, and yet valid for the large---and
in practice often sufficient---class of closed systems that contain only
nonstrict guards; moreover for reachability checking 
the continuous and discrete semantics coincide on this subclass.
%
%
%
%
%
\begin{figure}[t!]
\begin{minipage}[c]{0.5\linewidth}
\centering
\scalebox{1}{
\begin{tikzpicture}[font=\scriptsize]
\pgfnodecircle{p0}[stroke]{\pgfxy(0,4)}{7pt}{7pt}
\pgfnodecircle{p0s}[stroke]{\pgfxy(0,4)}{5pt}{7pt}
\pgfnodecircle{goal}[stroke]{\pgfxy(0,1.7)}{7pt}{7pt}

\pgfputat{\pgfxy(0.7,1.7)}{\pgfbox[center,center]{$\textit{Goal}$}}

\pgfsetendarrow{\pgfarrowsingle}
\pgfnodeconnline{p0}{goal}

\pgfnodeconncurve{p0}{p0}{240}{170}{1.5cm}{1.5cm}

\pgfnodeconncurve{p0}{p0}{140}{70}{1.5cm}{1.5cm}


\pgfnodeconncurve{p0}{p0}{300}{10}{1.5cm}{1.5cm}

\pgfputat{\pgflabel{.5}{\pgfxy(0,5)}{\pgfxy(0.6,-0.5)}{-4pt}}{\pgfbox[center,base]{

$x_1 = x_2 = \cdots = x_n = 0  \wedge y \geq 1$

}}

\pgfputat{\pgflabel{.5}{\pgfxy(1.5,10)}{\pgfxy(0.6,-1)}{-4pt}}{\pgfbox[center,base]{

$\cdots$

}}

\pgfputat{\pgflabel{.5}{\pgfxy(-3.5,11.5)}{\pgfxy(0.6,-1)}{-4pt}}{\pgfbox[center,base]{

$x_2 = 2, x_2:=0$

}}

\pgfputat{\pgflabel{.5}{\pgfxy(0,7)}{\pgfxy(0.6,-1)}{-4pt}}{\pgfbox[center,base]{
$x_1 = 1, x_1:=0$ $\;\;\;\;\;\;x_n = n, x_n:=0$
}}


\end{tikzpicture}
}

\end{minipage}
\begin{minipage}[c]{0.4\linewidth}
\centering
\small 
\begin{tabular}{|r|r|r|}
\hline
\textbf{Size}&\textbf{Discrete}&\textbf{Uppaal}\\ 
\hline
4&0.2& $<$0.1 \\ 
\hline
5&1.1 &2.2 \\  
\hline
6&5.5&27.5 \\ 
\hline
7&97& $>$300 \\ 
\hline
8&out of memory& - \\ 
\hline
\end{tabular}

\end{minipage}

\caption{Discrete vs. zone-based reachability algorithm (time in
  seconds)}
\label{fig:prime}
\end{figure}
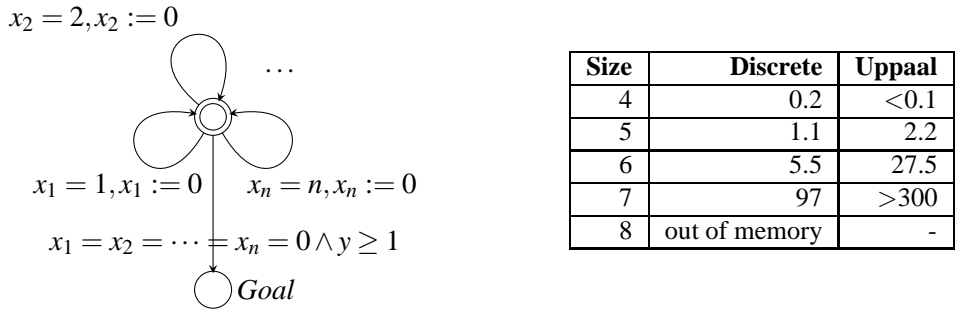

As  an  example consider  the  timed automaton  
shown  in Figure~\ref{fig:prime},
containing $n$  clocks and  $n$ self-loops  where the $i$'th  loop
has the guard $x_i=i$ and resets the clock $x_i$.  
We are interested in whether or not
we can  reach the \emph{Goal}  location.  For this to  happen, all
clocks  $x_1,\ldots,x_n$  must  \emph{simultaneously} have  the  value
zero,  corresponding  effectively to  calculating  the  least common
multiple of  the
numbers from $1$ to $n$. In Figure~\ref{fig:prime} we compare the
verification times of the zone-based reachability performed in \uppaal with that
of a simple Python  based implementation of discrete time reachability
checker for timed automata. 
Opposite to what one might expect, it turns
out that in this case the naive discrete  implementation without any
speed optimizations outperforms a state-of-the-art model checking tool.

On  the other  hand, the  disadvantage of  discretization is  that the
number  of states  to  be considered  explodes  when the  size of  the
constants  appearing in  the constraints  of the  timed  automaton are
increased.      In    fact,     the     experimental    results     of
Lamport~\cite{DBLP:conf/charme/Lamport05}
show that the zone-based methods outperform
discreterized methods when the maximum constant in the timed automaton
exceeds 10. Also  in \cite{DBLP:conf/charme/Lamport05}  the BDD-based
model checker SMV was  applied to symbolically represent the discreterized
state-space.  This  representation  is less  sensitive  to the
maximum constant of the model, yet in experimental 
results~\cite{DBLP:conf/forte/BeyerN03, DBLP:conf/hybrid/AsarinBKMPR97} it
appears that  the zone-based method  is still superior for  constants larger
than 16.

Inspired  by   the   success  of   discretization  reported   in
Figure~\ref{fig:prime},  we revisit the  problem of  finding efficient
data structures for the analysis of timed automata.  In particular, we
introduce  a  new  data  structure called  \emph{time-darts}  for  the
symbolic representation  of the state-spaces of  timed automata.  Compared
with  the  complete  discretization,  a  single  time-dart  allows us  to
represent an arbitrary large set of states, yet the time complexity of
operations  
remain  linear  in  the  number of  clocks,
providing a potential advantage compared to DBMs.

We propose a  symbolic reachability algorithm based on  a forward search.
To  ensure the termination of the forward search
the so-called  extrapolation  of  time   darts  with  respect  to the  maximum
constant appearing  in the model is required.   Given the subtleties
of extrapolation,\footnote{Despite  several earlier claims,  it was not
  before~\cite{DBLP:conf/stacs/Bouyer03} that  a complete---and a quite
  non-trivial---proof of correctness of zone-based forward reachability
  was given.}  we prove the termination  and correctness of  the proposed
algorithm.
We perform  several   experiments  in  order  to  compare  the
performance  of  time-darts versus the complete  discretization
representation.   The main  conclusion is  that the  time-dart method
consistently outperforms the complete  discretization and it is particularly
well suited for scaling up the constants used in the model.
Given the simplicity of implementing discrete-time algorithms compared
to the DBM-based ones, our method can be in practice well suited for the
verification of closed time systems with moderately large constants.

\section{Timed Automata}


Let $\nat$ be the set of  nonnegative integers and let
$\natinf = \nat \cup \{ \infty \}$.
The comparison and addition operators
are defined as expected, in particular
$n < \infty$ and $n + \infty = \infty$ for $n \in \nat$.


A \emph{Discrete Timed Transition System} (\dtts) is a pair $T = (S,
\goes{})$ where
$S$ is a set of \emph{states}, and
$\goes{} \subseteq S \times (\nat \cup \{\tau\}) \times S$ is a
\emph{transition relation} written 
$s \goes{d} s'$ if $(s,d,s') \in \goes{}$ where $d \in \nat$ for \emph{delay actions}, and
$s \goes{\tau} s'$ if $(s,\tau,s') 
\in \goes{}$ for \emph{switch actions}.
By $\goes{}^*$ we denote the reflexive and transitive closure 
of the relation $\goes{} \defin \goes{\tau} \cup \bigcup_{d \in \nat} \goes{d}$.



Let \TAClocks be a finite set of clocks. 
A (discrete) \emph{clock valuation} of clocks from \TAClocks is a function $v:
\TAClocks \rightarrow \nat$. The set of all clock valuations
is denoted by $\valuations$.
Let $v \in \valuations$. 
We define the valuation $v+d$ after a delay
of $d \in \nat$ time units by 
$(v+d)(x) \defin v(x) + d$ for every $x \in \TAClocks$. 
For a subset $R \subseteq \TAClocks$ of clocks 
we define the valuation $v[R:=0]$ where all clocks from $R$ are
reset to zero  by
$v[R:=0](x) \defin v(x)$ 
for $x \in C \setminus R$ and $v[R:=0](x) \defin 0$ for $x \in R$.

A nonstrict (or closed) \emph{time interval} $I$ is of the
form $[a, b]$ or $[a, \infty)$ 
where $a, b \in \nat$ and $a \leq b$. 
The set of all time intervals is denoted by $\setIntervals$.
We use the functions $\ubound{}, \lbound{}: \setIntervals \goes{} \nat$ to 
return the upper resp. lower bound of a given interval.
A \emph{clock guard} over the set of clocks $\TAClocks$ is a function 
$g: \TAClocks \goes{} \setIntervals$ that assigns a time 
interval to each clock. 
We denote the set of all clock guards 
over $\TAClocks$ by $\clockguards(\TAClocks)$.
We write $v \models g$ for a valuation $v \in \valuations$
and a guard $g \in \clockguards(\TAClocks)$ whenever
$v(x) \in g(x)$ for all $x \in \TAClocks$. 

\begin{definition}[Timed Automaton]
A \emph{timed automaton} (TA) is a tuple $\TADefFull$ where 
$\TAlocs$ is a finite set of \emph{locations},
$\TAClocks$ is a finite set of \emph{clocks},
$\TATrans \subseteq L\times \clockguards(\TAClocks) \times 2^{\TAClocks}\times L$
is a finite \emph{transition relation} written
$\ell \goes{g,R} \ell'$ for $(\ell,g,R,\ell') \in \goes{}$, and
$\TAinit \in \TAlocs$ is an \emph{initial} location.
\end{definition} 
Note that we do not consider clock invariants as they can be substituted 
by adding corresponding clock guards to the outgoing transitions 
while preserving the answers to location-reachability checking.

A \emph{configuration} of a timed automaton $A$ 
is a pair $(\ell,v)$ where $\ell \in L$ 
and $v \in \valuations$. 
We denote the set of all configurations of $A$ by $\config(A)$.
The \emph{initial configuration} of $A$ is $(\ell_0,v_0)$ where 
$v_0(x) \defin 0$ for all $x \in \TAClocks$. 



\begin{definition}[Discrete Semantics]
A TA $\TADefFull$ generates a \dtts $\TDS(A) \defin (\config(A), \goesDS{})$ 
where states are configurations of $A$ and the transitions are
given by
\begin{center}
  \begin{tabular}{ll}
   $(\ell,v) \goesDS{\tau} (\ell',v[R:=0])$ & if
        $\ell \goes{g,R} \ell'$ such that $v \models g$ 
  \\
   $(\ell,v) \goesDS{d} (\ell,v+d)$ & if 
	$d \in \nat$.
  \end{tabular}
\end{center}
\end{definition}


The discrete semantics clearly yields an infinite state space
due to unbounded time delays. We will
now recall that the reachability problem for a \TA $A$ 
can be solved by looking only 
at a finite prefix of the state space up to some constant
determining the largest possible delay.
Let $\maxConstant$ be the largest integer that appears
in any guard of $A$.
Two valuations $v, v' \in \valuations$ are \emph{equivalent}
up to the maximal constant $\maxConstant$, written $v \dequi v'$, if 
$$\forall x \in \TAClocks.\ v(x) = v'(x) \; \vee \; 
(v(x) > \maxConstant \; \wedge \; v'(x) > \maxConstant).
$$

Observe that the equivalence relation $\dequi$ has only
finitely many equivalence classes as there are finitely many
clocks and each of them is bounded by the constant $\maxConstant$. 
\begin{lemma} \label{lem:sameguards}
Let $v,v' \in \valuations$ s.t. $v \dequi v'$ and
let  $g\in \clockguards(\TAClocks)$ be a guard where 
$0 \leq \lbound(g(x)) \leq \maxConstant$, and 
$\ubound(g(x)) = \infty$ or $0 \leq \ubound(g(x)) \leq \maxConstant$ for all $x \in \TAClocks$.
Then $v \models g$ iff  $v' \models g$.
\end{lemma}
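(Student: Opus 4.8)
The plan is to exploit that, by definition, $v \models g$ means $v(x) \in g(x)$ holds for \emph{all} $x \in \TAClocks$; since this is a conjunction over the clocks, it suffices to establish the single-clock equivalence $v(x) \in g(x) \iff v'(x) \in g(x)$ for each fixed $x$, and the claimed biconditional $v \models g \iff v' \models g$ then follows at once. So I would fix an arbitrary clock $x$ and split according to the two disjuncts that the definition of $v \dequi v'$ provides for that clock.

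In the first case $v(x) = v'(x)$, and the single-clock equivalence is immediate because the two valuations agree on $x$. The substantive case is $v(x) > \maxConstant$ and $v'(x) > \maxConstant$, where I would invoke the hypotheses on the endpoints of $g(x)$. If $\ubound(g(x)) = \infty$, then from $\lbound(g(x)) \leq \maxConstant < v(x)$ (and likewise for $v'(x)$) both values lie above the lower bound and hence inside the half-open interval $g(x)$, so both sides of the biconditional are true. If instead $\ubound(g(x)) \leq \maxConstant$, then $v(x) > \maxConstant \geq \ubound(g(x))$ forces $v(x) \notin g(x)$, and symmetrically $v'(x) \notin g(x)$, so both sides are false. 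In either subcase the single-clock equivalence holds.

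I do not anticipate a genuine obstacle here: the argument is a routine case split, and the whole point is simply that the constraints on $g$ confine every finite endpoint to $\maxConstant$, which renders any two values exceeding $\maxConstant$ indistinguishable by the guard. The only place demanding mild care is the bookkeeping of \emph{which} bound of $g(x)$ is decisive in each subcase, and making sure to treat the $\ubound(g(x)) = \infty$ possibility separately from the finite-upper-bound case, since the hypothesis on $g$ is phrased as a disjunction.
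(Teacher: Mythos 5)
Your proof is correct: the per-clock case split (agreement on $x$ versus both values exceeding $\maxConstant$, with the finite and infinite upper-bound subcases handled separately) is exactly the routine argument this lemma calls for, and the pointwise biconditional does lift to the conjunction defining $v \models g$. Note that the paper itself states this lemma without proof, evidently regarding it as immediate, so your write-up supplies precisely the argument the authors had in mind rather than diverging from it.
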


Moreover, any two configurations
with the same location and equivalent valuations are
timed bisimilar (for the definition of timed bisimilarity
see e.g~\cite{LY:97}). 


\begin{lemma} \label{lem:bisim}
The relation $B= \{ ( (\ell,v),(\ell,v') \mid v \dequi v' \}$ 
is a timed bisimulation for any timed automaton with
its maximum constant $\maxConstant$.
\end{lemma}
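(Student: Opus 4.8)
The plan is to verify directly that $B$ satisfies the defining transfer conditions of a timed bisimulation. First I would observe that $B$ is symmetric, since $\dequi$ is an equivalence relation and in particular symmetric; this reduces the task to establishing the transfer property in a single direction. So fix a pair $((\ell,v),(\ell,v')) \in B$, i.e.\ $v \dequi v'$, and consider the two kinds of moves available in $\TDS(A)$: delay moves and switch moves.

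For a delay move $(\ell,v) \goesDS{d} (\ell,v+d)$ with $d \in \nat$, the matching move $(\ell,v') \goesDS{d} (\ell,v'+d)$ is always enabled, because the discrete semantics imposes no clock invariants and hence permits every delay. It then remains to check that the targets are again related, i.e.\ $v+d \dequi v'+d$. This I would prove by the obvious case analysis on each clock $x$: if $v(x)=v'(x)$ then $v(x)+d = v'(x)+d$; and if instead $v(x) > \maxConstant$ and $v'(x) > \maxConstant$, then since $d \geq 0$ also $v(x)+d > \maxConstant$ and $v'(x)+d > \maxConstant$. Either way the defining clause of $\dequi$ is preserved, so $\dequi$ is stable under delays.

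For a switch move $(\ell,v) \goesDS{\tau} (\ell',v[R:=0])$ arising from a transition $\ell \goes{g,R} \ell'$ with $v \models g$, I would first invoke Lemma~\ref{lem:sameguards} to conclude $v' \models g$: its hypotheses are satisfied because $\maxConstant$ is by definition the largest integer occurring in any guard, so every $\lbound(g(x))$ and every finite $\ubound(g(x))$ lies in $[0,\maxConstant]$. Consequently the same transition fires from $v'$, giving $(\ell,v') \goesDS{\tau} (\ell',v'[R:=0])$. To close the case I would show $v[R:=0] \dequi v'[R:=0]$ by splitting on membership in $R$: clocks in $R$ are reset to $0$ on both sides and hence agree, while clocks outside $R$ keep their old values and therefore inherit the relation from $v \dequi v'$. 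Thus $\dequi$ is also stable under resets.

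Combining symmetry with these two transfer arguments establishes that $B$ is a timed bisimulation. I do not expect a genuine obstacle: the guard-matching step is already delegated to Lemma~\ref{lem:sameguards}, and the only new content is the closure of $\dequi$ under the operations $v \mapsto v+d$ and $v \mapsto v[R:=0]$, each of which reduces to a one-line case distinction. The single point deserving care is confirming that the constants appearing in the guards of $A$ really do meet the bound hypotheses of Lemma~\ref{lem:sameguards}, which is immediate from the definition of $\maxConstant$.
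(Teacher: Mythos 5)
Your proof is correct and follows essentially the same route as the paper's: reduce to one direction by symmetry of $\dequi$, handle the switch case via Lemma~\ref{lem:sameguards} together with closure of $\dequi$ under resets, and handle the delay case by the clock-wise case distinction (equal values stay equal; values above $\maxConstant$ stay above $\maxConstant$). You merely spell out details the paper leaves as ``easy to verify,'' such as the reset case split and checking the hypotheses of Lemma~\ref{lem:sameguards}.
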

\begin{proof}
Let $((\ell,v),(\ell,v')) \in B$. We analyse only
the switch and delay actions from $(\ell,v)$; the
situation for the transitions from $(\ell,v')$ is symmetric.
\begin{itemize}
\item Assume that $(\ell,v) \goesDS{\tau} (\ell',v[R:=0])$ via a transition
$\ell \goes{g,R} \ell'$.
Due to Lemma~\ref{lem:sameguards} and the fact that
$v \models g$, we get $v' \models g$. Hence also
$(\ell,v') \goesDS{\tau} (\ell',v'[R:=0])$ and it is easy to verify that
$v[R:=0] \dequi v'[R:=0]$.
\item Assume that $(\ell,v) \goesDS{d} (\ell,v+d)$.
We want to argue that also $(\ell,v') \goesDS{d} (\ell,v'+d)$ such that
$v+d \dequi v'+d$, however, this is easy to see from that facts
that (i) if $v(x),v'(x) > \maxConstant$ then also
$(v+d)(x),(v'+d)(x) > \maxConstant$ and (ii)
if $v(x)=v'(x) \leq \maxConstant$ then
$(v+d)(x)=(v'+d)(x)$.
\qed
\end{itemize}
\end{proof}

We now define an alternative discrete semantics
of \TA with only finitely many reachable configurations. 
First, for the maximum constant $\maxConstant$, we define
a bounded addition operator 
$$ n \mcplus_\maxConstant m \defin \begin{cases} \maxConstant + 1 & 
\text{if $n+m > \maxConstant$,} \\
                            n+m &\text{otherwise.}
\end{cases}$$
The operation $\mcplus_\maxConstant$ is in a natural way extended 
to functions and tuples.

\begin{definition}[Bounded Discrete Semantics]
A TA $\TADefFull$ with the maximal constant 
$\maxConstant$ generates a 
\dtts
$\TBDS(A) \defin (\config(A), \goesEDS{})$ where states are configurations of $A$ and the transition relation
$\goes{}$ is defined by
\begin{center}
  \begin{tabular}{ll}
   $(\ell,v) \goesEDS{\tau} (\ell',v[R:=0])$ & if
        $\ell \goes{g,R} \ell'$ such that $v \models g$ 
\\
   $(\ell,v) \goesEDS{d} (\ell,v \mcplus_{\maxConstant} d)$ & if 
	$d \in \nat$.
  \end{tabular}
\end{center}
\end{definition}


We say that a location $\ell_g$ is \emph{reachable} 
in $\TDS(A)$ resp.  in $\TBDS(A)$ if 
$(\ell_0, v_0) \goes{}^* (\ell_g,v)$ for some valuation $v$
where $\goes{}$ is $\goesDS{}$ resp. $\goesEDS{}$.

We conclude that the bounded semantics preserves reachability
of locations, the main problem we are interested in.
This fact follows from Lemma~\ref{lem:bisim}.

\begin{theorem} \label{thm:timedreachabiligy}
A location $\ell$ is reachable in $\TDS(A)$ iff
$\ell$ is reachable in $\TBDS(A)$.
\end{theorem}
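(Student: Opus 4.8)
The plan is to handle both implications at once by exhibiting a single relation linking the two semantics and showing it is a strong bisimulation with respect to the labelled transitions (switch actions labelled by $\tau$, delay actions labelled by $d \in \nat$). I would reuse the relation $B = \{((\ell,v),(\ell,v')) \mid v \dequi v'\}$ of Lemma~\ref{lem:bisim}, now read as a relation whose first component is a configuration of $\TDS(A)$ and whose second component is a configuration of $\TBDS(A)$. Since $v_0 \dequi v_0$, the two initial configurations $(\ell_0,v_0)$ are related, so once $B$ is shown to be a bisimulation across the two systems, any location reachable on one side can be matched by a run to the same location on the other side, yielding both directions of the theorem simultaneously.

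Checking the switch actions requires no new work, as the argument is verbatim that of Lemma~\ref{lem:bisim}: if $(\ell,v) \goesDS{\tau} (\ell',v[R:=0])$ via $\ell \goes{g,R} \ell'$ with $v \models g$, then $v \dequi v'$ together with Lemma~\ref{lem:sameguards} gives $v' \models g$, hence $(\ell,v') \goesEDS{\tau} (\ell',v'[R:=0])$, and one verifies $v[R:=0] \dequi v'[R:=0]$ clock by clock; the matching in the opposite direction is symmetric. Note that the switch transition relation is literally identical in $\TDS(A)$ and $\TBDS(A)$, so this case only re-imports what has already been established.

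The only genuinely new point concerns the delay actions, which is where I expect the main obstacle to lie, since the two semantics use different delay operators. The key fact I would isolate first is that the bounded delay never leaves the $\dequi$-class of the ordinary delay: for every valuation $w$ and every $d \in \nat$ one has $w \mcplus_\maxConstant d \dequi w + d$ (for each clock $x$, if $w(x)+d \leq \maxConstant$ both sides equal $w(x)+d$, and if $w(x)+d > \maxConstant$ then $(w \mcplus_\maxConstant d)(x) = \maxConstant+1 > \maxConstant$ while $(w+d)(x) = w(x)+d > \maxConstant$, so both exceed $\maxConstant$). Granting this, a $\TDS$ delay $(\ell,v) \goesDS{d} (\ell,v+d)$ is matched by $(\ell,v') \goesEDS{d} (\ell,v' \mcplus_\maxConstant d)$, and I must show $v+d \dequi v' \mcplus_\maxConstant d$: from $v \dequi v'$ the two facts (i) and (ii) recorded in the proof of Lemma~\ref{lem:bisim} give $v+d \dequi v'+d$, the isolated fact gives $v'+d \dequi v' \mcplus_\maxConstant d$, and transitivity of $\dequi$ closes the case.

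The reverse matching, starting from a $\TBDS$ delay $(\ell,v') \goesEDS{d} (\ell,v' \mcplus_\maxConstant d)$ and answered by $(\ell,v) \goesDS{d} (\ell,v+d)$, is symmetric and uses the same two ingredients, so $B$ is a bisimulation. Since bisimilar configurations carry the same location and each can respond to the other's moves, reachability of $\ell$ from $(\ell_0,v_0)$ in $\TDS(A)$ is equivalent to reachability of $\ell$ from $(\ell_0,v_0)$ in $\TBDS(A)$, which is exactly the statement. In fact only location-reachability is needed here, a property strictly weaker than the full bisimulation, so a reader preferring a lighter argument could instead prove by induction on run length the invariant ``every configuration reachable in one system has a $\dequi$-equivalent counterpart reachable in the other,'' whose inductive step is precisely the switch and delay analyses above.
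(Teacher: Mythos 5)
Your proof is correct and takes essentially the same route as the paper, whose entire argument for this theorem is the one-line remark that it ``follows from Lemma~\ref{lem:bisim}.'' Your elaboration---reading $B$ as a relation between configurations of $\TDS(A)$ and $\TBDS(A)$, isolating the key fact $w \mcplus_\maxConstant d \dequi w + d$, and closing the delay case by transitivity of $\dequi$---is exactly the detail that the paper leaves implicit in that derivation.
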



\section{Naive Reachability Algorithm}

We can now describe the naive search algorithm
that explores in a standard way, point by point, the finite state-space of the
bounded semantics and provides the answer
to the location reachability problem. 
Algorithm~\ref{alg:naivealg} searches through all reachable states, starting 
from the initial location, until 
a goal configuration is found (returning true) 
or all configurations are visited (returning false). 
Notice that the algorithm is nondeterministic
as it is not specified what element should be removed from
$\textit{Waiting}$ at line~\ref{alg:naivealg:remove} (such choice
depends on the concrete search strategy like DFS or BFS). 
The next theorem states that Algorithm~\ref{alg:naivealg} is correct.

\begin{algorithm}[t]
  \KwIn{A timed automaton \TADefFull and a location $\ell_g \in \TAlocs$}
  \KwOut{true if $\ell_g$ is reachable in $\TDS(A)$, false otherwise}
  \SetKwFunction{AddToPW}{AddToPW}
  \Begin{

	$\mathit{Passed}:= \emptyset;\ \mathit{Waiting} := \emptyset; $\; 
	\AddToPW{$\ell_0, v_0$} \label{alg:naivealg:firstadd} \;

	\While{Waiting $ \neq \emptyset$ \label{alg:naivealg:mainloop}}  {
		remove some $(\ell, v)$ from $\mathit{Waiting}$ \label{alg:naivealg:remove} \;
		$\mathit{Passed} := \mathit{Passed} \cup \{ (\ell, v) \} $
\label{alg:naivealg:addtopass} \;

		\ForAll {$(\ell', v')$ such that $(\ell, v) \goesEDS{\tau} (\ell',v')$
\label{alg:naivealg:forall}} {
			\AddToPW{$\ell', v'$} \label{alg:naivealg:adddis} \;
			}
			\AddToPW{$\ell, (v\mcplus_{\mathit{MC}} 1) $}  \label{alg:naivealg:addtime} \;

	}
	\Return false \label{alg:naivealg:false}\;
	}

	\vspace{.4cm}

	\AddToPW{$\ell$, $v$}\; {
	{
		\If{ $(\ell, v) \notin \mathit{Passed} \; \cup
\textit{Waiting}$ \label{alg:naivealg:testpw} }{

			\If {$ \ell = \ell_g $ \label{alg:naivealg:test}}{
				\Return true \label{alg:naivealg:true}\ \ \ \  /* and terminate the whole algorithm */\;
			}\Else {
				$\textit{Waiting} := \textit{Waiting} \cup \{
(\ell, v) \}$ \label{alg:naivealg:addwaiting} \;
			}

		} 	
}

  }
  \caption{Naive reachability algorithm}
  \label{alg:naivealg}
\end{algorithm}

\begin{theorem} \label{thm:discreteAlgorithm}
Let $A$ be a timed automaton and let $\ell_g$ be a location.
Algorithm~\ref{alg:naivealg} terminates, and it returns true iff
$\ell_g$ is reachable in the discrete semantics $\TDS(A)$.
\end{theorem}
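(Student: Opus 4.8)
The plan is to prove two things about Algorithm~\ref{alg:naivealg}: termination and correctness. For termination, I would first observe that every configuration the algorithm ever adds to $\textit{Waiting}$ is reachable in $\TBDS(A)$ from $(\ell_0,v_0)$. This is immediate by induction: the switch successors added at line~\ref{alg:naivealg:adddis} are exactly the $\goesEDS{\tau}$-successors, and the configuration added at line~\ref{alg:naivealg:addtime} is the unit-delay successor $(\ell, v\mcplus_\maxConstant 1)$, which is a $\goesEDS{1}$-transition. Since by the observation following the definition of $\dequi$ the relation has only finitely many equivalence classes, and every valuation appearing in $\TBDS(A)$ has each clock bounded by $\maxConstant+1$, the set $\config(A)$ restricted to reachable configurations of the bounded semantics is finite. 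The guard at line~\ref{alg:naivealg:testpw} ensures no configuration is ever added to $\textit{Waiting}$ twice (once processed it enters $\mathit{Passed}$, and it is in $\textit{Waiting}$ until then), so the main loop executes only finitely many times and the algorithm terminates.

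For correctness I would split into soundness and completeness. Soundness (if the algorithm returns \texttt{true} then $\ell_g$ is reachable in $\TDS(A)$) follows from the reachability invariant above together with Theorem~\ref{thm:timedreachabiligy}: a \texttt{true} answer at line~\ref{alg:naivealg:true} means some reachable configuration of $\TBDS(A)$ has location $\ell_g$, so $\ell_g$ is reachable in $\TBDS(A)$, hence by Theorem~\ref{thm:timedreachabiligy} also in $\TDS(A)$. Completeness is the converse: if $\ell_g$ is reachable in $\TDS(A)$, then by Theorem~\ref{thm:timedreachabiligy} it is reachable in $\TBDS(A)$ and I must show the algorithm explores enough of the bounded state space to detect this. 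The key claim here is that every configuration reachable in $\TBDS(A)$ eventually enters $\mathit{Passed}\cup\textit{Waiting}$ (unless the algorithm has already returned \texttt{true}). I would prove this by induction on the length of a witnessing $\TBDS(A)$-run, the crucial point being that the algorithm simulates a single $\goesEDS{d}$-transition of arbitrary delay $d$ by iterating the unit-delay step at line~\ref{alg:naivealg:addtime}: from a passed configuration $(\ell,v)$ the successor $(\ell, v\mcplus_\maxConstant 1)$ is enqueued, then later dequeued and its own unit successor enqueued, and so on, so all of $(\ell, v\mcplus_\maxConstant d)$ for every $d$ are eventually reached.

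The main obstacle will be making the completeness induction airtight, and in particular handling the interaction with the early-termination branch. Because the algorithm may return \texttt{true} and halt before the entire reachable state space is explored, the inductive claim cannot simply assert that all reachable configurations are visited; it must be phrased conditionally, e.g.\ ``if the algorithm runs to line~\ref{alg:naivealg:false} returning \texttt{false}, then every configuration reachable in $\TBDS(A)$ was placed in $\mathit{Passed}$.'' From this, completeness follows by contraposition: were $\ell_g$ reachable in $\TBDS(A)$ yet the algorithm returned \texttt{false}, some configuration $(\ell_g,v)$ would have been added via \addtopw, but the test at line~\ref{alg:naivealg:test} would have returned \texttt{true} first---a contradiction. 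A secondary subtlety is verifying that the unit-delay chain indeed reaches $v\mcplus_\maxConstant d$ for all $d$ despite the saturation at $\maxConstant+1$; this is routine since once a clock saturates, adding further units leaves it at $\maxConstant+1$, matching $\mcplus_\maxConstant$ exactly.
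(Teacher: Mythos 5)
Your proposal is correct and follows essentially the same route as the paper's proof: reduction to the bounded semantics via Theorem~\ref{thm:timedreachabiligy}, soundness through the invariant that every argument of \addtopw is reachable in $\TBDS(A)$, and completeness by induction on the length of a run whose delays are decomposed into unit delays, with early termination handled conditionally. The only cosmetic differences are that you derive termination from the reachability invariant plus boundedness of the $\TBDS(A)$-reachable valuations (the paper argues finiteness directly from the bounded addition at the delay step), and that you phrase the completeness induction by contraposition rather than with the paper's ``unless the algorithm already returned true'' clause.
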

\begin{proof}
First notice that the algorithm terminates because
there is only a finite number of configurations
that can be possibly added to $\textit{Waiting}$:    
the number of locations is finite and due to the bounded addition
at line~\ref{alg:naivealg:addtime} the total number of configurations
is finite too.
Whenever a configuration
is removed from the set $\textit{Waiting}$, it is added to
$\textit{Passed}$ (line \ref{alg:naivealg:addtopass})
and can never be inserted into $\textit{Waiting}$ again
due to the test at line \ref{alg:naivealg:testpw}.
As we remove one element from $\textit{Waiting}$ each time
the body of the while-loop is executed, the algorithm necessarily
terminates either at line~\ref{alg:naivealg:false} or even
earlier at line~\ref{alg:naivealg:true} if the
goal location is reachable.

Now we prove the correctness part.
By Theorem~\ref{thm:timedreachabiligy} we can equivalently argue
that the algorithm returns true iff $\ell_g$ is reachable
in $\TBDS(A)$.	

``$\Rightarrow$'': Assume that Algorithm~\ref{alg:naivealg} returns true.
We want to show that the location $\ell_g$ is reachable in $\TBDS(A)$. 
This can be established by the following invariant:
any call of $\addtopw$ with the argument $(\ell, v)$ implies
that the configuration $(\ell,v)$ is reachable in $\TBDS(A)$.
For the initialisation at line~\ref{alg:naivealg:firstadd}
this clearly holds. In the while-loop the calls to $\addtopw$
are at lines~\ref{alg:naivealg:adddis} and~\ref{alg:naivealg:addtime}.
At line~\ref{alg:naivealg:adddis} we know by the invariant
that $(\ell,v)$ is reachable and we call $\addtopw$ only
with $(\ell',v')$ such that $(\ell,v) \goesEDS{\tau} (\ell',v')$, 
so the invariant is preserved. Similarly
at line~\ref{alg:naivealg:addtime} for the argument
$(\ell, (v\mcplus_{\textit{MC}} 1))$ of $\addtopw$ holds that
$(\ell,v) \goesEDS{1} (\ell, (v\mcplus_{\textit{MC}} 1))$, so it
is reachable as well. 
 
``$\Leftarrow$'': Assume that a configuration $(\ell',v')$
is reachable via $n$ transitions in $\TBDS(A)$, formally
$(\ell_0, v_0) \goesEDS{}^n (\ell', v')$, where (without loss
of generality) all delay transitions in the sequence are of the
form $\goesEDS{1}$, in other words they add exactly one-unit time delay. 
By induction on $n$ we will establish that during
any execution of the algorithm
there is eventually a call of $\addtopw$ with the argument $(\ell',v')$,
unless the algorithm already returned true.
If $n=0$ then the claim is trivial due to the call at 
line~\ref{alg:naivealg:firstadd}. If $n>0$ then
either (i) $(\ell_0, v_0) \goesEDS{}^{n-1} (\ell,v) \goesEDS{\tau} (\ell', v')$
with the last switch action 
or (ii) $(\ell_0, v_0) \goesEDS{}^{n-1} (\ell,v) \goesEDS{1} (\ell', v')$
with the last one-unit delay action. By induction hypothesis,
unless the algorithm already returned true, there will be eventually
a call of $\addtopw$ with the argument $(\ell,v)$ and this element
is added to the set $\textit{Waiting}$. Because the algorithm
terminates, the element $(\ell,v)$ will be
eventually removed from $\textit{Waiting}$ at line~\ref{alg:naivealg:remove}
and its switch successors and the one-unit delay successor
will become arguments of the call to $\addtopw$ at 
lines~\ref{alg:naivealg:adddis} and~\ref{alg:naivealg:addtime}. Hence
the induction hypothesis for the cases (i) and (ii) is established.
\qed
\end{proof}

\section{Time-Dart Data Structure} \label{sec:darts}

We shall now present a novel symbolic representation of the 
discrete state space.
The symbolic structure, we call it a \emph{time-dart}, allows us to
represent a number of concrete configurations in a more compact way
so that time successors of a configuration are stored without
being explicitly enumerated. 
We start with the definition of an \emph{anchor point},
denoting the beginning of a time-dart.

\begin{figure}[t]
\begin{center}
\begin{tikzpicture}[scale=0.50]
\draw[help lines] (0,0) grid (9,9);
\draw[->,thick] (0,0) -- (10,0);
\draw[->,thick] (0,0) -- (0,10); 
\node at (5.0,10) {time-dart $((2,0),2,5)$};
\node at (-.6,10.0) {\scriptsize \rotatebox{90}{clock $y$}};
\node at (10.0,-.5) {\scriptsize clock $x$};
\draw[dashed] (2,0) -- (12,10);
\filldraw[fill=black, draw=black] (1.8,-0.2) -- (2.2,-0.2) -- (2.2,0.2) -- (1.8,0.2) -- (1.8,-0.2);
\draw[dotted] (2,0) -- (2.6,-0.5); \node at (2.7,-0.6) {$\anchor$};
\draw[dotted] (4,2) -- (4.6,1.5); \node at (4.7,1.4) {$w$};
\draw[dotted] (7,5) -- (7.6,4.5); \node at (7.7,4.4) {$p$};

\filldraw[fill=white,draw=black, thick] (4,2) circle(6pt);
\filldraw[fill=white,draw=black, thick] (5,3) circle(6pt);
\filldraw[fill=white,draw=black, thick] (6,4) circle(6pt);
\filldraw[black] (7,5) circle(6pt);
\filldraw[black] (8,6) circle(6pt);
\filldraw[black] (9,7) circle(6pt);
\filldraw[black] (10,8) circle(6pt);
\filldraw[black] (11,9) circle(6pt);
\end{tikzpicture}
\end{center}
\caption{A time-dart $(\anchor,w,p)$ where $\anchor(x)=2$, $\anchor(y)=0$, 
$w=2$ and $p=5$}
\label{fig:dart}
\end{figure}

An \emph{anchor point} over a set of clocks $\TAClocks$ is
a  clock valuation $\anchor: \TAClocks \goes{} \nat$
where $\anchor(x) = 0 $ for at least one
$x\in \TAClocks$.
We denote the set of all anchor points over a set of clocks $\TAClocks$
by $\setAnchor(\TAClocks)$.
Now we are ready to define
\emph{time-darts}.

\begin{definition}[Time-Dart]
A \emph{time-dart} over a set of clocks \TAClocks
is a triple 
$(\anchor,w,p)$ where
$\anchor \in \setAnchor(\TAClocks)$ is an anchor point,
$w \in \nat$ is a waiting distance, and
$p \in \natinf$ is a passed distance such that $w \leq p$.
\end{definition}

The intuition is that a 
time-dart describes the corresponding passed and waiting sets
in a given location. Figure~\ref{fig:dart} shows a dart example with
two clocks $x$ and $y$, anchor point $(2,0)$, waiting distance $2$
and passed distance $5$. The empty circles represent the points 
in the waiting set and the filled circles represent the points in
the passed set, formally defined by: 
$\mathit{Waiting}(\anchor, w, p)  = \{(\anchor + d) \mid \dwl \leq d < \dpl\}$
and
$\mathit{Passed}(\anchor, w, p)  = \{ (\anchor + d) \mid  d \geq \dpl \}$. 

The \emph{passed-waiting list} is represented as a function from
locations and anchor points to the corresponding waiting and passed
distances (here $\pwUdef$ represents the undefined value): 
$$\pwList: L \times \setAnchor 
\goes{} (\nat \times \natinf)  \cup \{ \pwUdef \}\ .$$
Such a structure can be conveniently implemented as
a hash map.
A given passed-waiting list $\pwList$ 
defines the sets of passed and waiting configurations. 
\begin{align*}
\mathit{Waiting}(\pwList)  & = \{ (\ell, v) \mid \exists \anchor. \pwList(\ell,
\anchor) = (w,p) \neq  \pwUdef 
\text{ and } v \in \mathit{Waiting}(\anchor, w, p)  \}     \\
\mathit{Passed}(\pwList)  & = \{ (\ell, v) \mid \exists \anchor. \pwList(\ell,
\anchor) = (w,p) \neq  \pwUdef 
\text{ and } v \in \mathit{Passed}(\anchor, w, p) \}      
\end{align*} 


\section{Reachability Algorithm Based on Time-Darts}
\begin{algorithm}[t!]
  \KwIn{A timed automaton $\TADefFull$ and a location $\ell_g \in \TAlocs$}
  \KwOut{true if $\ell_g$ is reachable in $\TDS(A)$, false otherwise}

  \SetKwFunction{AddToPW}{AddToPW}
  \SetKwFunction{Break}{break}

\vspace{.3cm}
  \Begin{

	$\pwList(\ell, \anchor) := \pwUdef $ for all $(\ell, \anchor)$\ \ \ /* default value */\;
	\AddToPW{$\ell_0, \anchor_0,0,\punknown$} where $\anchor_0(x):=0$ for all $x \in \TAClocks$ \label{alg:addInit} \;

	\While{ $\exists (\ell, \anchor). \; \pwList(\ell, \anchor) = (w,p) \text{ and } w < p$} { \label{alg:whileChoice}
	$\pwList(\ell, \anchor) := (w, w) \label{alg:addSelf}$\;
	\ForEach{ $(\ell, g, R, \ell') \in \goes{}$ }{ \label{alg:foreach}
		$ \mathit{start} := \maxOf{w,\maxOf{ \{ \lboundOf{g(x)} - \anchor(x) 
  \mid x \in C \}}}$ \label{alg:start}\;
		$ \mathit{end} := \minOf{ \{ \uboundOf{g(x)} - \anchor(x) \mid x \in
C \}}$ \label{alg:end} \;

		\If{$(\mathit{start} < p \; \wedge \; \mathit{start} \leq 
\mathit{end})$
\label{alg:guardsat}}{ 
		  \If{$R = \emptyset$}{
\label{alg:resetStart}
			\AddToPW{{$\ell', (\anchor \mcplus_{\maxConstant} \mathit{start}) - \mathit{start}, \mathit{start}, \punknown$}} \label{alg:addOneSuccToPW}\;
		  }\Else{ 
$\mathit{stop} := \max\{{\mathit{start},
\maxConstant+1-\min_{x \in C\smallsetminus R}{\anchor(x)}}\}$ \;
			\For{$n := \mathit{start} \text{ to } \minOf{\mathit{end}, p-1,\mathit{stop}$} }{
				\AddToPW{{$\ell', (\anchor
\mcplus_{\maxConstant} n)[R:=0]  , 0, \punknown$}}
\label{alg:addManySuccToPW}\; \label{alg:resetStop}
			}
		}

		}
	}
	}
	\Return false

  }

\vspace{.4cm}

\AddToPW{$\ell$, $\anchor$, $w$, $p$}\; {
	{
	\If{$\ell=\ell_g$}{
		\Return true \ \ \ \  /* and terminate the whole algorithm */
\label{alg:returnTrue}
	}
		\If{ $\pwList(\ell, \anchor) = \pwUdef $ }{

			 $\pwList(\ell, \anchor) := (w, p)$ \;

		} \Else {

			$(w', p') := \pwList(\ell, \anchor)$ \;
				$\pwList(\ell, \anchor) := (\minOf{w, w'},
\minOf{p, p'})$ \label{alg:addMinToPW} \;
		}

	}
}

  \caption{Time-dart reachability algorithm}
  \label{alg:reachability}
\end{algorithm}

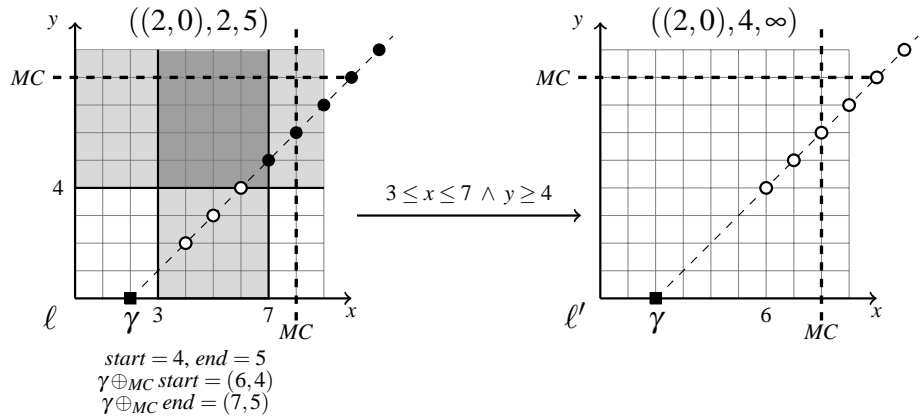
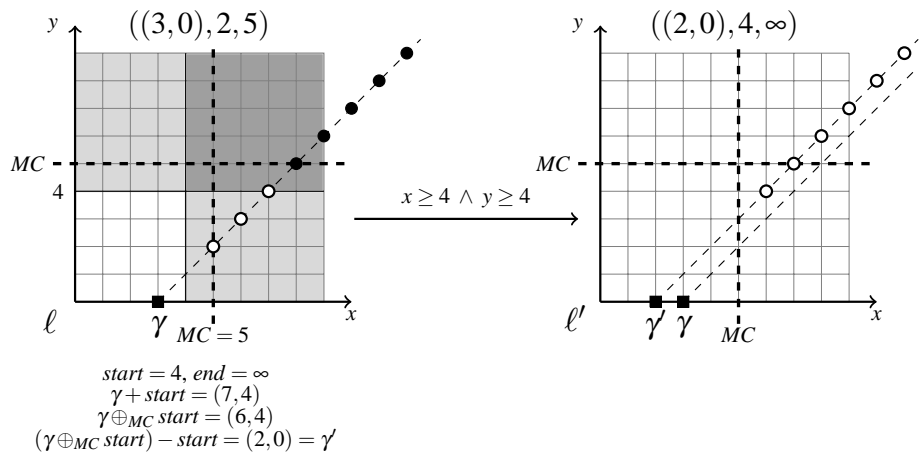
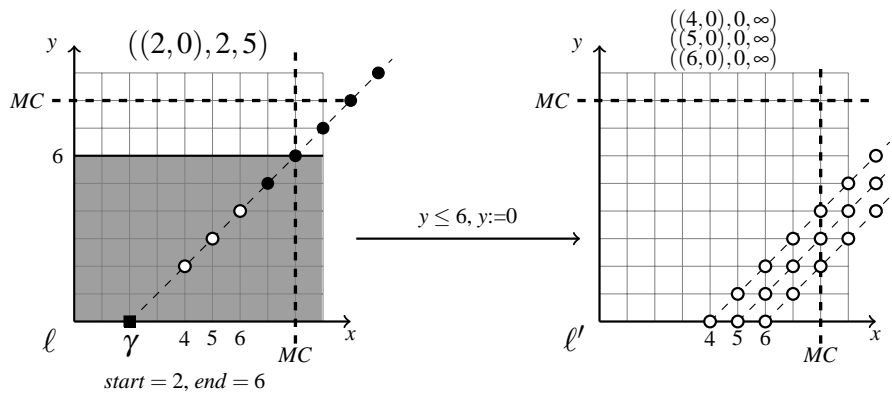
\begin{figure}[th!]
\begin{center}
\subfloat[Unchanged anchor point]{
\scalebox{1.05}{
\begin{tikzpicture}[scale=0.35]
\filldraw[fill=gray!30, draw=white, thick]
 (3,0) -- (7,0) -- (7,9) -- (3,9) -- (3,0);
\filldraw[fill=gray!30, draw=white, thick]
 (0,4) -- (9,4) -- (9,9) -- (0,9) -- (0,4);
\filldraw[fill=gray!75, draw=white, thick]
    (3,4) -- (7,4) -- (7,9) -- (3,9) -- (3,4);

\draw[help lines] (0,0) grid (9,9);
\draw[->,thick] (0,0) -- (10,0);
\draw[->,thick] (0,0) -- (0,10);
\node at (4.5,10) {$((2,0),2,5)$};
\draw[very thick, dashed] (8,-0.8) -- (8,9.9);
\draw[very thick, dashed] (-0.8,8) -- (9.9,8);
\node at (-1.7,8) {\mbox{\scriptsize $\maxConstant$}};
\node at (8,-1.2) {\mbox{\scriptsize $\maxConstant$}};
\node at (-.8,10.0) {\scriptsize \rotatebox{0}{$y$}};
\node at (10.0,-.5) {\scriptsize $x$};
\draw[dashed] (2,0) -- (11.5,9.5);
\filldraw[fill=black, draw=black] (1.8,-0.2) -- (2.2,-0.2) -- (2.2,0.2) -- (1.8,0.2) -- (1.8,-0.2);
\node at (2,-0.9) {$\anchor$};
\node at (3,-0.6) {\scriptsize $3$};
\node at (7,-0.6) {\scriptsize $7$};
\node at (-0.6,4) {\scriptsize $4$};
\node at (-0.9,-0.7) {\mbox{\large $\ell$}};

\node at (4,-2.2) {\mbox{\scriptsize $\mathit{start}=4$, $\mathit{end}=5$}};
\node at (4,-3.0) {\mbox{\scriptsize $\anchor \oplus_\maxConstant \mathit{start} = (6,4)$}};
\node at (4,-3.8) {\mbox{\scriptsize $\anchor \oplus_\maxConstant \mathit{end} = (7,5)$}};

\draw[thick] (3,9) -- (3,0) -- (7,0) -- (7,9);
\draw[thick] (0,9) -- (0,4) -- (9,4);

\filldraw[fill=white,draw=black, thick] (4,2) circle(6pt);
\filldraw[fill=white,draw=black, thick] (5,3) circle(6pt);
\filldraw[fill=white,draw=black, thick] (6,4) circle(6pt);
\filldraw[black] (7,5) circle(6pt);
\filldraw[black] (8,6) circle(6pt);
\filldraw[black] (9,7) circle(6pt);
\filldraw[black] (10,8) circle(6pt);
\filldraw[black] (11,9) circle(6pt);

\begin{scope}[xshift=9.2cm,yshift=-2cm]
\draw[->,thick] (1,5) to node[above] 
     {\scriptsize $3 \leq x \leq 7\ \wedge\ y \geq 4$} (9.1,5);
\end{scope}

\begin{scope}[xshift=19cm]
\draw[help lines] (0,0) grid (9,9);
\draw[->,thick] (0,0) -- (10,0);
\draw[->,thick] (0,0) -- (0,10);
\node at (4.5,10) {$((2,0),4,\infty)$};
\draw[very thick, dashed] (8,-0.8) -- (8,9.9);
\draw[very thick, dashed] (-0.8,8) -- (9.9,8);
\node at (-1.7,8) {\mbox{\scriptsize $\maxConstant$}};
\node at (8,-1.2) {\mbox{\scriptsize $\maxConstant$}};
\node at (-.8,10.0) {\scriptsize \rotatebox{0}{$y$}};
\node at (10.0,-.5) {\scriptsize $x$};
\draw[dashed] (2,0) -- (11.5,9.5);
\filldraw[fill=black, draw=black] (1.8,-0.2) -- (2.2,-0.2) -- (2.2,0.2) -- (1.8,0.2) -- (1.8,-0.2);
\node at (2,-0.9) {$\anchor$};
\node at (6,-0.6) {\scriptsize $6$};
\node at (-0.9,-0.6) {\mbox{\large $\ell'$}};

\filldraw[fill=white,draw=black, thick] (6,4) circle(6pt);
\filldraw[fill=white,draw=black, thick] (7,5) circle(6pt);
\filldraw[fill=white,draw=black, thick] (8,6) circle(6pt);
\filldraw[fill=white,draw=black, thick] (9,7) circle(6pt);
\filldraw[fill=white,draw=black, thick] (10,8) circle(6pt);
\filldraw[fill=white,draw=black, thick] (11,9) circle(6pt);
\end{scope}
\end{tikzpicture}
\label{fig:succ1}
} }

\subfloat[Shift of anchor point]{
\scalebox{1.05}{
\begin{tikzpicture}[scale=0.35]
\filldraw[fill=gray!30, draw=gray!20, thick]
 (4,0) -- (9,0) -- (9,9) -- (4,9) -- (4,0);
\filldraw[fill=gray!30, draw=gray!20, thick]
 (0,4) -- (9,4) -- (9,9) -- (0,9) -- (0,4);
\filldraw[fill=gray!75, draw=gray!20, thick]
    (4,4) -- (9,4) -- (9,9) -- (4,9) -- (4,4);

\node at (-0.6,4) {\scriptsize $4$};

\draw[help lines] (0,0) grid (9,9);
\draw[->,thick] (0,0) -- (10,0);
\draw[->,thick] (0,0) -- (0,10);
\node at (4.5,10) {$((3,0),2,5)$};
\draw[very thick, dashed] (5,-0.8) -- (5,9.3);
\draw[very thick, dashed] (-0.8,5) -- (9.9,5);
\node at (-1.7,5) {\mbox{\scriptsize $\maxConstant$}};
\node at (5,-1.2) {\mbox{\scriptsize $\maxConstant=5$}};
\node at (-.8,10.0) {\scriptsize \rotatebox{0}{$y$}};
\node at (10.0,-.5) {\scriptsize $x$};
\draw[dashed] (3,0) -- (12.5,9.5);
\filldraw[fill=black, draw=black] (2.8,-0.2) -- (3.2,-0.2) -- (3.2,0.2) -- (2.8,0.2) -- (2.8,-0.2);
\node at (3,-0.9) {$\anchor$};
\node at (-0.9,-0.7) {\mbox{\large $\ell$}};

\node at (4,-2.6)
   {\mbox{\scriptsize $\mathit{start}=4$, $\mathit{end}=\infty$}};
\node at (4,-3.4) {\mbox{\scriptsize $\anchor + \mathit{start} = (7,4)$}};
\node at (4,-4.2) 
{\mbox{\scriptsize $\anchor \oplus_\maxConstant \mathit{start} = (6,4)$}};
\node at (4,-5) 
{\mbox{\scriptsize $(\anchor \oplus_\maxConstant \mathit{start}) -
\mathit{start} = (2,0) = \anchor'$}};

\draw (4,0) -- (4,9);
\draw (0,4) -- (9,4);

\filldraw[fill=white,draw=black, thick] (5,2) circle(6pt);
\filldraw[fill=white,draw=black, thick] (6,3) circle(6pt);
\filldraw[fill=white,draw=black, thick] (7,4) circle(6pt);
\filldraw[black] (8,5) circle(6pt);
\filldraw[black] (9,6) circle(6pt);
\filldraw[black] (10,7) circle(6pt);
\filldraw[black] (11,8) circle(6pt);
\filldraw[black] (12,9) circle(6pt);

\begin{scope}[xshift=9.1cm,yshift=-2cm]
\draw[->,thick] (1,5) to node[above] 
     {\scriptsize $x \geq 4 \ \wedge\ y \geq 4$} (9.1,5);
\end{scope}

\begin{scope}[xshift=19cm]
\draw[help lines] (0,0) grid (9,9);
\draw[->,thick] (0,0) -- (10,0);
\draw[->,thick] (0,0) -- (0,10);
\node at (4.5,10) {$((2,0),4,\infty)$};
\draw[very thick, dashed] (5,-0.8) -- (5,9.3);
\draw[very thick, dashed] (-0.8,5) -- (9.9,5);
\node at (-1.7,5) {\mbox{\scriptsize $\maxConstant$}};
\node at (5,-1.2) {\mbox{\scriptsize $\maxConstant$}};
\node at (-.8,10.0) {\scriptsize \rotatebox{0}{$y$}};
\node at (10.0,-.5) {\scriptsize $x$};
\draw[dashed] (3,0) -- (11.5,8.5);
\draw[dashed] (2,0) -- (11.5,9.5);
\filldraw[fill=black, draw=black] (1.8,-0.2) -- (2.2,-0.2) -- (2.2,0.2) -- (1.8,0.2) -- (1.8,-0.2);
\filldraw[fill=black, draw=black] (2.8,-0.2) -- (3.2,-0.2) -- (3.2,0.2) -- (2.8,0.2) -- (2.8,-0.2);
\node at (3,-0.9) {$\anchor$};
\node at (2,-0.85) {$\anchor'$};
\node at (-0.9,-0.6) {\mbox{\large $\ell'$}};

\filldraw[fill=white,draw=black, thick] (6,4) circle(6pt);
\filldraw[fill=white,draw=black, thick] (7,5) circle(6pt);
\filldraw[fill=white,draw=black, thick] (8,6) circle(6pt);
\filldraw[fill=white,draw=black, thick] (9,7) circle(6pt);
\filldraw[fill=white,draw=black, thick] (10,8) circle(6pt);
\filldraw[fill=white,draw=black, thick] (11,9) circle(6pt);
\end{scope}
\end{tikzpicture}
\label{fig:succ2}
} }

\subfloat[Reset of a clock]{
\scalebox{1.05}{
\begin{tikzpicture}[scale=0.35]
\filldraw[fill=gray!75, draw=white, thick]
    (0,0) -- (0,6) -- (9,6) -- (9,0);

\draw[help lines] (0,0) grid (9,9);
\draw[->,thick] (0,0) -- (10,0);
\draw[->,thick] (0,0) -- (0,10);
\node at (4.5,10) {$((2,0),2,5)$};
\draw[very thick, dashed] (8,-0.8) -- (8,9.9);
\draw[very thick, dashed] (-0.8,8) -- (9.9,8);
\node at (-1.7,8) {\mbox{\scriptsize $\maxConstant$}};
\node at (8,-1.2) {\mbox{\scriptsize $\maxConstant$}};
\node at (-.8,10.0) {\scriptsize \rotatebox{0}{$y$}};
\node at (10.0,-.5) {\scriptsize $x$};
\draw[dashed] (2,0) -- (11.5,9.5);
\filldraw[fill=black, draw=black] (1.8,-0.2) -- (2.2,-0.2) -- (2.2,0.2) -- (1.8,0.2) -- (1.8,-0.2);
\node at (2,-0.9) {$\anchor$};
\node at (4,-0.6) {\scriptsize $4$};
\node at (5,-0.6) {\scriptsize $5$};
\node at (6,-0.6) {\scriptsize $6$};
\node at (-0.6,6) {\scriptsize $6$};
\node at (-0.9,-0.7) {\mbox{\large $\ell$}};

\node at (4,-2.2) {\mbox{\scriptsize $\mathit{start}=2$, $\mathit{end}=6$}};
\draw[thick] (0,6) -- (9,6);

\filldraw[fill=white,draw=black, thick] (4,2) circle(6pt);
\filldraw[fill=white,draw=black, thick] (5,3) circle(6pt);
\filldraw[fill=white,draw=black, thick] (6,4) circle(6pt);
\filldraw[black] (7,5) circle(6pt);
\filldraw[black] (8,6) circle(6pt);
\filldraw[black] (9,7) circle(6pt);
\filldraw[black] (10,8) circle(6pt);
\filldraw[black] (11,9) circle(6pt);

\begin{scope}[xshift=9.2cm,yshift=-2cm]
\draw[->,thick] (1,5) to node[above] 
     {\scriptsize $y \leq 6$, $y$:=$0$} (9.1,5);
\end{scope}

\begin{scope}[xshift=19cm]
\draw[help lines] (0,0) grid (9,9);
\draw[->,thick] (0,0) -- (10,0);
\draw[->,thick] (0,0) -- (0,10);
\node at (4.5,10.9) {\scriptsize $((4,0),0,\infty)$};
\node at (4.5,10.2) {\scriptsize $((5,0),0,\infty)$};
\node at (4.5,9.5) {\scriptsize $((6,0),0,\infty)$};
\draw[very thick, dashed] (8,-0.8) -- (8,9.3);
\draw[very thick, dashed] (-0.8,8) -- (9.9,8);
\node at (-1.7,8) {\mbox{\scriptsize $\maxConstant$}};
\node at (8,-1.2) {\mbox{\scriptsize $\maxConstant$}};
\node at (-.8,10.0) {\scriptsize \rotatebox{0}{$y$}};
\node at (10.0,-.5) {\scriptsize $x$};
\draw[dashed] (4,0) -- (10.5,6.5);
\draw[dashed] (5,0) -- (10.5,5.5);
\draw[dashed] (6,0) -- (10.5,4.5);
\node at (4,-0.6) {\scriptsize $4$};
\node at (5,-0.6) {\scriptsize $5$};
\node at (6,-0.6) {\scriptsize $6$};
\node at (-0.9,-0.6) {\mbox{\large $\ell'$}};

\filldraw[fill=white,draw=black, thick] (4,0) circle(6pt);
\filldraw[fill=white,draw=black, thick] (5,1) circle(6pt);
\filldraw[fill=white,draw=black, thick] (6,2) circle(6pt);
\filldraw[fill=white,draw=black, thick] (7,3) circle(6pt);
\filldraw[fill=white,draw=black, thick] (8,4) circle(6pt);
\filldraw[fill=white,draw=black, thick] (9,5) circle(6pt);
\filldraw[fill=white,draw=black, thick] (10,6) circle(6pt);

\filldraw[fill=white,draw=black, thick] (5,0) circle(6pt);
\filldraw[fill=white,draw=black, thick] (6,1) circle(6pt);
\filldraw[fill=white,draw=black, thick] (7,2) circle(6pt);
\filldraw[fill=white,draw=black, thick] (8,3) circle(6pt);
\filldraw[fill=white,draw=black, thick] (9,4) circle(6pt);
\filldraw[fill=white,draw=black, thick] (10,5) circle(6pt);

\filldraw[fill=white,draw=black, thick] (6,0) circle(6pt);
\filldraw[fill=white,draw=black, thick] (7,1) circle(6pt);
\filldraw[fill=white,draw=black, thick] (8,2) circle(6pt);
\filldraw[fill=white,draw=black, thick] (9,3) circle(6pt);
\filldraw[fill=white,draw=black, thick] (10,4) circle(6pt);
\end{scope}

\end{tikzpicture}
\label{fig:succ3}
}}
\end{center}

\caption{Successor generation for a selected time-dart}
\label{fig:successor}
\end{figure}

We can now present Algorithm~\ref{alg:reachability} showing us how
time-darts can be used to compute the set of reachable states
of a timed automaton in a compact and efficient way.
The algorithm repeatedly selects from the waiting list a location $\ell$ with a 
time-dart $(\anchor,w,p)$ 
that still contains some unexplored points ($w<p$). Then for each
edge $\ell \goes{g,R} \ell'$ in the timed automaton
it computes the $\mathit{start}$ and $\mathit{end}$ delays
from the anchor point such that $\mathit{start}$ is the minimum
delay where the guard $g$ gets first enabled and
$\mathit{end}$ is the maximum possible delay so that $g$ is still enabled.
Depending on the concrete situation it will add a new
time-dart (or a set of darts) with location $\ell'$ to the waiting list
by calling $\addtopw$. A switch transition is always followed 
by a delay transition 
that is computed symbolically (including in a single step all possible delays).
There are several cases that determine what kinds of
new time-darts are generated. Figure~\ref{fig:successor} gives
a graphical overview of the different situations.
In Figure~\ref{fig:succ1} we illustrate the produced time-dart that serves 
as the argument for the call
to $\addtopw$ at line~\ref{alg:addOneSuccToPW} of the algorithm
(no clocks are reset).
Here the anchor point $\anchor$ is not modified because 
$((\anchor \oplus_\maxConstant \mathit{start}) - \mathit{start}) = \anchor$.
Figure~\ref{fig:succ2} shows another example of a call at
line~\ref{alg:addOneSuccToPW} where the anchor point
changes. Finally, Figure~\ref{fig:succ3} explains the case
where some clocks are reset and several new darts are added
in the body of the for-loop at line~\ref{alg:addManySuccToPW} of the algorithm
(the for-loop starts from $\mathit{start}$ and stops as soon as either
$\mathit{end}$, the beginning of the passed list, or the number 
$\mathit{stop}$---used for performance optimization--is reached).
We note that the figures show the time-darts that the function
$\addtopw$ is called with; inside the function the information
already stored in the passed-waiting list for the concrete anchor point
and location
is updated so that we take the minimum of the current and new
waiting and passed distances (line~\ref{alg:addMinToPW} of the algorithm).

\begin{figure}[t!]
\begin{center}
\begin{tikzpicture}[font=\scriptsize,yscale=1.6,xscale=1.6]
\pgfnodecircle{l1}[stroke]{\pgfxy(0,0)}{7pt}{7pt}
\pgfnodecircle{l2}[stroke]{\pgfxy(2,0)}{7pt}{7pt}
\pgfnodecircle{l3}[stroke]{\pgfxy(4,0)}{7pt}{7pt}
\pgfnodecircle{l4}[stroke]{\pgfxy(6,0)}{7pt}{7pt}

\pgfputat{\pgfxy(0,0)}{\pgfbox[center,center]{$\ell_0$}}
\pgfputat{\pgfxy(3.2,0)}{\pgfbox[center,center]{$\ell_1$}}
\pgfputat{\pgfxy(6.4,0)}{\pgfbox[center,center]{$\ell_2$}}
\pgfputat{\pgfxy(9.6,0)}{\pgfbox[center,center]{$\ell_3$}}

\pgfsetendarrow{\pgfarrowsingle}
\pgfnodeconnline{l1}{l2}
\pgfnodeconnline{l3}{l2}
\pgfnodeconnline{l3}{l4}

\pgfnodeconncurve{l2}{l2}{90}{135}{1cm}{1cm}
\pgfnodeconncurve{l2}{l3}{45}{135}{0.5cm}{0.5cm}

\pgfputat{\pgflabel{.5}{\pgfxy(1.7,1.2)}{\pgfxy(0.6,-1)}{-4pt}}{\pgfbox[center,base]{
$x \geq 2$
}}

\pgfputat{\pgflabel{.5}{\pgfxy(2.7,3.3)}{\pgfxy(1.4,-1.6)}{-4pt}}{\pgfbox[center,base]{
$x := 0$
}}

\pgfputat{\pgflabel{.5}{\pgfxy(9.5,2.5)}{\pgfxy(1.2,-1)}{-4pt}}{\pgfbox[center,base]{
$x\geq 2, y \geq 2$  $x := y:= 0$
}}

\pgfputat{\pgflabel{.5}{\pgfxy(10.3,1)}{\pgfxy(0.6,-1)}{-4pt}}{\pgfbox[center,base]{
$x \geq 1$
}}

\pgfputat{\pgflabel{.5}{\pgfxy(15,1)}{\pgfxy(0.6,-1)}{-4pt}}{\pgfbox[center,base]{
$x \leq 1, y \geq 2$
}}
\end{tikzpicture}
\end{center}

\bigskip

\begin{center}
\begin{tabular}{|l|l|r|r|r|r|r|r|r|r|}
\hline
Location & Anchor & $ 0 $  & 1 & 2 & 3 & 4 & 5 & 6 & 7 \\ \hline
$\ell_0$ & $(0,0)$  & $\mathbf{(0, \infty)}$ & $(0,0)$ & $(0,0)$ & $(0,0)$ & $(0,0)$ & $(0,0)$ & $(0,0)$ & $(0,0)$ \\ \hline
$\ell_1$ & $(0,0)$  & $\bot$ & $\mathbf{(2,\infty)}$ & $(2,2)$ & $(2,2)$ & $(2,2)$ & $\mathbf{(1,2)}$ & $(1,1)$ & $(1,1)$  \\ \hline
$\ell_1$ & $(0,1)$  & $\bot$ & $\bot$ & $\bot$ & $\bot$ & $\bot$ & $\bot$ & $\mathbf{(0,\infty)}$ & $(0,0)$ \\ \hline
$\ell_1$ & $(0,2)$  & $\bot$ & $\bot$ & $\mathbf{(0,\infty)}$ & $(0,0)$ & $(0,0)$ & $(0,0)$ & $(0,0)$ & $(0,0)$ \\ \hline
$\ell_1$ & $(0,3)$  & $\bot$ & $\bot$ & $(0,\infty)$ & $\mathbf{(0,\infty)}$ & $(0,0)$ & $(0,0)$ & $(0,0)$ & $(0,0)$  \\ \hline
$\ell_2$ & $(0,0)$  & $\bot$ & $\bot$ & $(0,\infty)$ & $(0,\infty)$ & $\mathbf{(0,\infty)}$  & $(0,0)$ & $(0,0)$ & $(0,0)$  \\ \hline
\end{tabular}
\end{center}
\caption{Example of an execution of the algorithm (columns represent the
number of iterations of the main while-loop; all unlisted pairs
of locations and anchor points are constantly having the value $\bot$)}
\label{fig:example}
\end{figure}

Let us now demonstrate the execution of Algorithm~\ref{alg:reachability}
on the automaton depicted in Figure~\ref{fig:example}, 
where we ask if the goal location
$\ell_3$ is reachable from the initial state $(\ell_0, v_0)$
where $v_0(x)=v_0(y)=0$.
The values stored in the passed-waiting list after each
iteration of the while-loop are shown in the table such that
a column labelled with a number $i$ is the status of the
passed-waiting list after the $i$'th execution of the body of the while-loop;
all values for anchor points not listed in the table are constantly $\bot$.

Initially we set $\pwList(\ell_0,(0,0))=(0,\infty)$, meaning that
all points reachable from the initial valuation after an arbitrary delay 
action belong to the waiting list and should be explored.
As $\ell_0$ is not the goal state, the algorithm continues with the
execution of the main while-loop.
In the first iteration of the loop we pick the only element in the waiting
list so that $\ell = \ell_0$, $\anchor = (0,0)$, $w=0$ and $p=\infty$.
Then we update $\pwList(\ell_0,(0,0))$ to $(0,0)$\footnote{In 
each column we mark by bold font the element that is picked in the 
next iteration of the while-loop.}
according to line~\ref{alg:addSelf} of the algorithm, meaning that all 
points on the dart are now in the passed list.
After this we consider the transition from $\ell_0$ to
$\ell_1$ with the guard $x \in [2,\infty)$ (and the implicit guard $y \in [0,\infty)$)
and calculate the values of
$\mathit{start}$ (minimum delay from the anchor point to satisfy the guard
and at the same time having at least the delay $w$ where the waiting list
starts) 
and $\mathit{end}$ (maximum delay from the anchor point so that the guard is
still satisfied). In our example we have 
$\mathit{start} = \maxOf{0, (2-0),(0-0)} = 2$ and 
$\mathit{end} = \minOf{(\infty-2), (\infty-0)} = \infty$. 

Next we consider the test at line~\ref{alg:guardsat} that requires that
the minimum delay $\mathit{start}$ to enable all guards is not
in the region of already passed points ($\mathit{start} < p$) and
at the same time that it
is below the maximum delay after which the guard become
disabled ($\mathit{start} \leq \mathit{end}$). If this test fails, there
is no need to do anything with the currently picked element from the
waiting list. 
As the values in our example satisfy the condition at line~\ref{alg:guardsat}
and no clocks are reset, we update according to line~\ref{alg:addOneSuccToPW}
of the algorithm the value of $(\ell_1, (0,0))$ to $(2,\infty)$.
This means that in the future iterations we have to explore in location $\ell_1$
all points $(2,2), (3,3), (4,4), \ldots$.
Note that the addition and subtraction of $\mathit{start}$ at 
line~\ref{alg:addOneSuccToPW} had no effect as none of the clocks after
the minimum delay exceeded the maximum constant $2$; should this happen
the values exceeding the maximum constant get truncated to $\maxConstant+1$.

In the second iteration of the while-loop 
we select the location and anchor point $(\ell_1, (0,0))$,
with $w=2$ and $p=\infty$, set it to $(2,2)$ in the table
and mark it in bold as the selected point in the previous column. 
This time we have to explore two edges.
First, we select the self-loop that resets the clock $x$ and
we get $\mathit{start}=2$ and $\emph{end}=\infty$.
Now we execute the
lines~\ref{alg:resetStart} to \ref{alg:resetStop} as the edge contains a
reset. The for-loop will be run for the value of $n$ from $2$ to $3$.
The upper-bound of $3$ for the for-loop follows from the fact that
$\maxConstant=2$ and the maximum value of a clock that is not reset 
in the anchor point is $0$.
In the for-loop we add two successors (line \ref{alg:addManySuccToPW})
at the location $\ell_1$ with the anchor points $(0,2)$ and $(0,3)$.
Second, if we consider the edge from $\ell_1$ to $\ell_2$ we can
see that in location $\ell_2$ the anchor point $(0,0)$ is
set to $(0,\infty)$.

The remaining values stored in the passed-waiting list 
are computed in the outlined way. We can notice that after the 7th 
iteration of the while-loop the set $\mathit{Waiting}(\pwList)$ is
empty and the algorithm terminates. As the location $\ell_3$ has not
been discovered during the search, the algorithm returns false.

The correctness theorem requires a detailed technical
treatment and its complete proof is given in the full version of this paper.
Termination follows from the fact that newly added anchor points
are computed as $(\anchor \oplus_\maxConstant \mathit{start}) - \mathit{start}$
or $(\anchor \oplus_\maxConstant n)[R:=0]$ which ensures a finite
size of the passed-waiting list and that every time-dart $(\anchor,w,p)$ 
on the list satisfies $0 \leq w \leq \maxConstant$, $w < p$, and 
$p \leq \maxConstant$ or $p= \infty$. 
Soundness proof is by
a case analysis establishing a loop-invariant that
every call to $\addtopw$ only adds time-darts that represent 
reachable configurations in the bounded semantics. 
Finally, the completeness
proof is done by induction on the length of the computation leading to
a reachable configuration, taking into account the nondeterministic
nature of the algorithm, the fact that $\dequi$ is a timed bisimulation,
and it makes a full analysis of the different cases for adding new time-darts
present in the algorithm for its performance optimization.

\begin{theorem} \label{thm:timedartAlgorithm}
Let $A$ be a timed automaton and let $\ell_g$ be a location.
Algorithm~\ref{alg:reachability} terminates, and it returns true iff
$\ell_g$ is reachable in the discrete semantics $\TDS(A)$.
\end{theorem}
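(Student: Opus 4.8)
The plan is to first apply Theorem~\ref{thm:timedreachabiligy} to replace the target semantics $\TDS(A)$ by the bounded semantics $\TBDS(A)$, so that it suffices to prove that Algorithm~\ref{alg:reachability} returns true iff $\ell_g$ is reachable in $\TBDS(A)$. Throughout I would read a time-dart $(\anchor,w,p)$ stored at a location $\ell$ as the configuration set $\{(\ell,\anchor+d)\mid d\geq w\}$, split into the still-to-be-processed waiting part ($w\leq d<p$) and the processed passed part ($d\geq p$). Since anchors use ordinary unbounded addition while $\goesEDS{}$ delays cap values via $\oplus_\maxConstant$, every correspondence between darts and genuine $\TBDS$-configurations will be stated up to the equivalence $\dequi$, and I would use Lemma~\ref{lem:bisim} to move freely between $\dequi$-equivalent configurations.

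For termination I would first establish the shape invariant announced before the theorem: every dart on $\pwList$ satisfies $0\leq w\leq\maxConstant$, $w\leq p$, and either $p\leq\maxConstant$ or $p=\infty$, and every anchor coordinate lies in $\{0,\dots,\maxConstant+1\}$. The crucial point is that both anchor-producing expressions keep coordinates bounded: in $(\anchor\oplus_\maxConstant n)[R:=0]$ each coordinate is capped at $\maxConstant+1$, and for $(\anchor\oplus_\maxConstant \mathit{start})-\mathit{start}$ one checks that $\mathit{start}\leq\maxConstant$ (because $w\leq\maxConstant$ and $\lboundOf{g(x)}\leq\maxConstant$ while $\anchor(x)\geq 0$), so the clock that is $0$ in $\anchor$ returns to $0$ after the subtraction and no coordinate becomes negative or exceeds $\maxConstant+1$. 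Hence the set $L\times\setAnchor$ of reachable entries is finite and the value set $(\nat\times\natinf)$ per entry is finite too. Termination of the while-loop then follows from a well-founded measure: take $\pwUdef$ as top element and order each $(w,p)$ componentwise, then order the whole table componentwise over entries. Every $\addtopw$ only lowers entries (it takes minima at line~\ref{alg:addMinToPW}, or moves an entry down from $\pwUdef$), while line~\ref{alg:addSelf} strictly lowers the selected entry from $(w,p)$ with $w<p$ to $(w,w)$; so each iteration strictly decreases the table in this well-founded order.

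For soundness I would prove the loop invariant that whenever $\addtopw$ is called with $(\ell,\anchor,w,p)$, every configuration $(\ell,\anchor+d)$ with $d\geq w$ is $\dequi$-equivalent to some configuration reachable in $\TBDS(A)$. The base case is the initial call at line~\ref{alg:addInit}, justified by delay transitions from $(\ell_0,v_0)$. The inductive step is a case analysis on the edge $\ell\goes{g,R}\ell'$: using Lemma~\ref{lem:sameguards}, the condition that $\anchor+d$ satisfies $g$ reduces to $\mathit{start}\leq d\leq\mathit{end}$, so the switch is enabled exactly for those delays; in the no-reset branch (line~\ref{alg:addOneSuccToPW}) I would check that the new anchor $\anchor'$ satisfies $\anchor'+\mathit{start}=\anchor\oplus_\maxConstant \mathit{start}$, so the produced dart represents the switch-successor of a reachable configuration together with its further delays, and symmetrically in the reset branch (line~\ref{alg:addManySuccToPW}) each $(\anchor\oplus_\maxConstant n)[R:=0]$ is the reset of a reachable configuration. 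Since the algorithm returns true (line~\ref{alg:returnTrue}) only inside an $\addtopw$ call with $\ell=\ell_g$, the invariant yields a reachable configuration in location $\ell_g$.

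Completeness is where I expect the real difficulty. I would assume $\ell_g$ is reachable in $\TBDS(A)$ via a run whose delays are all unit delays $\goesEDS{1}$, assume for contradiction that the algorithm terminates returning false (so at termination $\mathit{Waiting}(\pwList)=\emptyset$ and every dart has $w=p$), and prove by induction on the run length that every reachable configuration is $\dequi$-covered by $\mathit{Passed}(\pwList)$. The inductive step must match a switch or a unit delay of the run against the symbolic successor generation, and the hard part is to show that the three performance cutoffs never drop a needed configuration modulo $\dequi$: the lower bound $w$ inside $\mathit{start}$ at line~\ref{alg:start} is safe because a smaller delay would already have been covered when the entry carried that smaller waiting distance; the cutoff $p-1$ is safe because delays $\geq p$ fall into the passed region, which I must show is closed under successors; and the cutoff $\mathit{stop}$ is safe because once $n\geq\maxConstant+1-\min_{x\in C\setminus R}\anchor(x)$ all non-reset clocks already exceed $\maxConstant$, so every later reset produces an anchor that is $\dequi$-equivalent to one already generated. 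Combining this closure-of-passed property with Lemma~\ref{lem:bisim} (so that $\dequi$-equivalent configurations have matching successors) lets me conclude that $\ell_g$ must have triggered a true-return, contradicting the assumption and finishing the proof.
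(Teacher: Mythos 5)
Your proposal is correct and follows essentially the same route as the paper, which itself only sketches the argument (deferring details to a full version): termination from the boundedness of the anchor expressions $(\anchor \oplus_\maxConstant \mathit{start}) - \mathit{start}$ and $(\anchor \oplus_\maxConstant n)[R:=0]$ together with the dart-shape invariant, soundness via the loop invariant that every $\addtopw$ call adds only darts representing $\TBDS(A)$-reachable configurations, and completeness by induction on run length using the fact that $\dequi$ is a timed bisimulation and a case analysis of the performance cutoffs. Your elaborations (the explicit well-founded measure on the passed-waiting table, the $\dequi$-modulo reading of darts, and the safety analysis of the $w$, $p-1$ and $\mathit{stop}$ cutoffs) are faithful refinements of exactly those points the paper's sketch identifies as the technical core.
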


\section{Experiments}

We have
conducted a number of experiments in order to 
test the performance of the time-dart state-space representation.
The experiments were done
within the project 
opaal~\cite{DHJLOOS:NFM:11}, a model-checking framework designed 
explicitly for fast prototyping and testing of verification algorithms
using the programming language Python.
The tool implements the pseudocode of both the fully discrete (called naive
in the tables) 
as well as the time-dart reachability algorithms 
based on passed-waiting list presented in Section~\ref{sec:darts}.

The experiments were conducted on 
Intel Core 2 Duo P8600@2.4Ghz 
running Ubuntu linux. The verification was interrupted after 
five minutes or when the memory limit of 2GB RAM was exceeded 
(marked in the tables as OOM). The number of discovered symbolic states 
corresponds 
to the total number of calls to the function $\addtopw$ (including
duplicates) and the number of stored states is the size of the passed-waiting 
list at the termination of the algorithm. Verification times 
(in seconds) are highlighted in the bold font. The examples and
tool implementation are available at 
\url{http://people.cs.aau.dk/~kyrke/download/timedart/timedart.tar.gz}.

\begin{figure}[t!]
{

\center
\small
\begin{tabular}{|lr|>{\bfseries}r|r|r|>{\bfseries}r|r|r|}
\hline
Model&\#
&\textbf{Naive}&Discovered&Stored&\textbf{Darts}&Discovered&Stored\\
\hline
\hline
T55&4&3.8&81,062&38,906&1.9&34,012&3,654\\
\hline
T55&5&13.0&254,969&110,907&5.9&111,543&10,739\\
\hline
T55&6&43.8&727,712&297,026&17.6&336,527&29,378\\
\hline
T55&7&95.7&1,431,665&524,270&32.0&607,483&51,730\\
\hline
T55&8&OOM&&&91.8&1,740,066&136,639\\
\hline
T55&9&-&-&-&255.7&4,700,607&347,136\\
\hline
T55&10&-&-&-& $>$300&-&-\\
\hline
\hline
T125&4&0.3&2,609&2,050&0.2&198&139\\
\hline
T125&5&1.6&18,394&14,772&0.2&713&503\\
\hline
T125&6&5.5&61,242&48,600&0.5&2,916&1,769\\
\hline
T125&7&20.3&205,808&161,394&1.4&10,337&6,102\\
\hline
T125&8&93.4&82,4630&529,032&4.7&39,242&20,392\\
\hline
T125&9&OOM&-&-&13.7&111,438&56,191\\
\hline
T125&10&-&-&-&34.5&274,939&126,895\\
\hline
T125&11&-&-&-&OOM&-&-\\
\hline
\hline
T155&4&0.4&5,796&3,048&0.3&1,532&467\\
\hline
T155&5&1.1&23,454&9,740&0.4&4,572&1,195\\
\hline
T155&6&19.9&433,674&14,2861&3.6&62,771&8,859\\
\hline
T155&7&28.5&577,179&18,7857&4.5&74,105&10,725\\
\hline
T155&8&32.3&620,138&203,178&4.8&78,093&11,508\\
\hline
T155&9&34.1&626,100&205,646&4.9&79,111&11,753\\
\hline
T155&10&60.8&1,035,226&329,193&7.5&241,44&18,574\\
\hline
T155&11&OOM&-&-&31.2&514,959&67,592\\
\hline
T155&12&-&-&-&37.7&608,974&80,634\\
\hline
T155&13&-&-&-&105.7&1,684,525&205,087\\
\hline
T155&14&-&&-&158.3&2,316,474&284,859\\
\hline
T155&15&-&-&-&164.5&2,409,417&298,288\\
\hline
T155&16&-&-&-&OOM&-&- \\
\hline

\end{tabular}
\caption{Results for three different TGS scaled by the number of tasks}
\label{fig:exp:tgs}

}
\end{figure}

\subsection{Task Graph Scheduling}
The task graph scheduling problem (TGS) is the problem of finding 
a feasible schedule for a number of parallel tasks with given precedence 
constraints and processing times on a fixed number of homogeneous 
processors~\cite{KA:JPDC:99}. 
The chosen task graphs for two processors were taken from
the benchmark~\cite{STG}
such that several scheduling problems with different degree of 
concurrency are included. The models are scaled by the number of tasks 
in the order given by the benchmark and the verification query performed
a full state-space search.
The experimental results 
are displayed in Figure~\ref{fig:exp:tgs}.
The data confirm that the time-dart verification technique saves
both the number of stored/discovered states and noticeably improves the 
verification speed, in particular in the model T155.

\subsection{Bridge Crossing Vikings}
The bridge crossing Vikings is a slightly modified version of the 
standard planning problem available in the official distribution of \uppaal; 
we only eliminated the used integer variables that are not supported
in our opaal implementation and are simulated by new locations.
The query searched the whole state-space.
\begin{figure}[t]
\center
\small
\begin{tabular}{|l|>{\bfseries}r|r|r|>{\bfseries}r|r|r|}

\hline
\#&\textbf{Naive}&Discovered&Stored&\textbf{Darts}&Discovered&Stored \\
\hline
\hline
2&0.2&295&152&0.1&87&46\\
\hline
3&0.2&2,614&1,263&0.2&754&336\\
\hline
4&0.8&16,114&7,588&0.5&4,902&1,759\\
\hline
5&4.9&90,743&42,294&2.5&29,144&8,308\\
\hline
6&33.2&501,958&235,635&13.4&165,535&38,367\\
\hline
7&OOM&-&-&74.8&900,439&177,807\\
\hline
8&-&-&-&$>$300&-&-\\
\hline
\end{tabular}

\caption{Results for bridge crossing scaled by the number of Vikings}
\label{fig:viking}
\end{figure}

Verification results are given in Figure~\ref{fig:viking}.
The performance of the time-dart algorithm is again better than 
the full discretization, even though in this case the constants
in the model are relatively small (proportional to the number
of Vikings), meaning that the potential
of time-darts is not fully exploited. 


\subsection{Train Level Crossing} 
\begin{figure}[t]
\small
\center
\begin{tabular}{|l|>{\bfseries}r|r|r|>{\bfseries}r|r|r|}

\hline
\#&\textbf{Naive}&Discovered&Stored&\textbf{Darts}&Discovered&Stored \\

\hline
\hline
1 &0.2&173&139&0.2&21&15\\
\hline
2&1.5&16684&11042&0.3&1140&647\\
\hline
3&OOM&-&-&4.6&40671&21721\\
\hline
4&- &-&-& OOM &-&-\\
\hline
\end{tabular}

\caption{Results for train level crossing scaled by the number of trains}
\label{fig:train}
\end{figure}

In train level crossing we consider auto-generated timed automata
templates constructed via automatic translation~\cite{BJS:ICFEM:09} 
from timed-arc Petri net model of a train level-crossing example. 
The auto-generated timed automata were produced by the tool 
\tapaal~\cite{DJJJMS:TACAS:12} 
and have a rather complex structure that human modelers
normally never design and hence we can test the potential
of the discrete-time engine also for the models translated from
other time-dependent formalisms.  
The query we asked searches the whole state-space.
We list the results in Figures~\ref{fig:train}
and the experiment demonstrates again the advantage
of the time-dart verification method.

\subsection{Fischer's Protocol}
\begin{figure}[t]
\small
\center
\begin{tabular}{|l|>{\bfseries}r|r|r|>{\bfseries}r|r|r|}

\hline
MC&\textbf{Naive}&Discovered&Stored&\textbf{Darts}&Discovered&Stored \\
\hline
\hline
3&3.1&36,774&25,882&1.1&9,464&6,238\\
\hline
4&4.5&53,655&38,570&1.4&14,341&8,725\\
\hline
5&6.2&74,415&54,513&1.8&20,226&11,548\\
\hline
9&17.8&202,965&157,555&3.8&53,846&26,200\\
\hline

15&64.5&569,280&466,888&10.4&133,796&58,258\\
\hline

18&111.8&850,164&710,857&14.2&187,595&78,823\\
\hline
19& OOM & - & - & 15.7& 207,544&86,350  \\
\hline
25&-&-&-&26.1&348,406&138,568 \\
\hline
38&-&-&-&61.3&778,095&293,203 \\
\hline
50&-&-&-&114.7&1,325,931&486,343 \\
\hline
66&-&-&-&217.2&2,214,846&795,808 \\
\hline
\end{tabular}

\caption{Experimental results for Fischer's mutual exclusion protocol}
\label{fig:fischer}
\end{figure}

The discrete-time techniques
are sensitive to the size of the constants present in the model.
We have therefore scaled our next experiment by the size of the maximal
constant (MC) that appears in the model in order to demonstrate the main
advantage of the time-dart algorithm. For this we use the well known
Fischer's protocol for ensuring a mutual 
exclusion between two or more parallel processes \cite{lamport87fast}. 
It is a standard model for testing the performance of verification tools;
we replaced one open interval in the model with a closed one such that
mutual exclusion is still guaranteed.
The concrete version of the protocol we verified was created
by a translation from timed-arc Petri net model of the protocol~\cite{AN:BQQ}
available as a demo example in the tool \tapaal\cite{DJJJMS:TACAS:12}. 
We searched the whole state-space and the results are summarized in 
Figure~\ref{fig:fischer}. It is clear that time-darts are superior
w.r.t. the scaling of the constants in the model, allowing us to verify 
(within the given limit of 300 seconds) models where the maximum constant 
is 66, opposed to only 18 when the full discretization is used.



\section{Conclusion}
We have introduced a new data structure of time-darts
in order to represent the reachable state-space of closed timed automata
models.
We showed on a number of experiments that our time-dart reachability algorithm
achieves a consistently better performance than the explicit search algorithm,
improving both the speed and memory requirements. This is obvious in particular
on models with larger constants (as demonstrated in the Fischer's experiment
or T155 task graph) 
where time-darts provide a compact representation of the delay successors and 
considerably improve both time and memory.

The algorithms were implemented in the interpreted language Python
without any further optimizations techniques like
partial order and symmetry reductions and advanced extrapolation techniques
and with only one global maximum constant.
This does not allow us to compare its performance directly with
the state-of-the-art optimized tools for real-time systems. 

An advantage of time-darts and explicit state-space methods in general is
that it is relatively easy to extend them with additional modelling
features like clock invariants and diagonal guards.
In our future work we will implement the time-dart algorithm in {\tt C++}
with additional optimizations (e.g. considering local constants instead of the 
global ones) and we shall also consider the verification of liveness properties.
It is clear that for large enough constants the DBM-search engine will always
combat the explicit methods (see~\cite{DBLP:conf/charme/Lamport05}); 
our technique can be so seen as a practical alternative
to the DBM-engines on the subset of models that for example use
counting features (like in our introductory example) and where DBM state-space
representation explodes even for models with small constants.
Another line of research will focus on further optimizations of the
time-dart technique by considering federations of time-darts
so that the data structure becomes even less sensitive to
the scaling  of the constants. 

\paragraph{Acknowledgements.} We would like to thank the anonymous
reviewers for their comments.

\bibliographystyle{eptcs}
\bibliography{biblio}


\end{document}